\newtheorem{theorem}{Theorem}[section]
\newtheorem{lem}[theorem]{Lemma}
\newtheorem{cor}[theorem]{Corollary}
\newtheorem{prop}[theorem]{Proposition}
\newtheorem{thm}[theorem]{Theorem}
\theoremstyle{definition}
\newtheorem{defn}[theorem]{Definition}
\newtheorem{rem}[theorem]{Remark}
\newcommand{\ra}{\rightarrow}
\newcommand{\aS}{\mathsf{ASA}}
\newcommand{\clos}{\mathrm{\bf{Clos}}}
\DeclareMathAlphabet\mathbfcal{OMS}{cmsy}{b}{n}
\newcommand{\dpa}{\mathsf{DP}}
\newcommand{\hd}{\mathsf{3DP}}
\newcommand{\hdz}{\mathsf{C3DP}}
\newcommand{\ctl}{\mathsf{C3\L}}
\newcommand{\cmv}{\mathsf{C3MV}}
\newcommand{\cb}{\bf{C(B)}}
\newcommand{\cbs}{C(B)}
\newcommand{\bool}{\mathsf{BA}}
\newcommand{\vbtsm}{\mathsf{B3SM}}
\title{Assume-guarantee contract algebras are dp-algebras}
\author[1, 2]{\small Jos\'e Luis Castiglioni
}
\author[3]{\small Rodolfo C. Ertola-Biraben}
\affil[1]{\footnotesize Departamento de Matem\'atica, Facultad de Ciencias Exactas, Universidad Nacional de La Plata, Argentina}
\affil[2]{\footnotesize CONICET-Argentina}
\affil[3]{\footnotesize CLE, Universidade Estadual de Campinas (UNICAMP), Campinas, SP, Brazil}
\date{}
\begin{document}

\maketitle

\begin{abstract}
In [Incer Romeo, I. X., \textit{The Algebra of Contracts}. Ph.D. Thesis, UC Berkeley (2022)] an algebraic perspective on assume-guarantee contracts is proposed. 
This proposal relies on a construction involving Boolean algebras. 
However, the structures thus proposed lack a clearly prescribed set of basic operations, necessary if we want to see them as a class of algebras (in the sense of Universal Algebra). 
In this article, by prescribing a suitable set of basic operations on contracts, we manage to describe these algebras as (a generating set of members of) well-known varieties.  
\end{abstract}

\section{Introduction}
\label{intro}
The aim of this note is to observe that the class of assume-guarantee contract algebras, as presented in \cite{IR}, can be seen as a particular class of dp-algebras (as appearing in \cite{TK}). 
More concretely, we show that choosing an adequate signature for assume-guarantee contract algebras, they may be regarded as dp-algebras. 
Furthermore, any centred three-valued dp-algebra (see Section \ref{DPA} for its definition) is isomorphic to an assume-guarantee contract algebra as defined in Section \ref{AGC}. 
Hence, from an algebraic point of view, we can abstractly identify  assume-guarantee contract algebras and centred three-valued dp-algebras.

The notion of contract derives from the theory of abstract data types and was first suggested by Meyer \cite{BM} in the context of the programming language Eiffel, 
following the original ideas introduced by Floyd and Hoare \cite{CH} to assign logical meaning to sequential imperative programs in the form of triples of assertions. 

\vskip.3cm
In this paper we will follow \cite{IR} for introducing the basic ideas regarding assume-guarantee contracts. 
We refer the reader to that work, \cite{BCNP} and \cite{IBV} for further details. 

As in this paper we will only consider assume-guarantee contracts, from now on they will be called contracts.
This will be the subject of Section \ref{AGC}.

The definition and properties of the classes of dp-algebras of interest for this work are presented in Section \ref{DPA}. 

In Section \ref{OSM-G}, we show that the variety of c3dp-algebras ($\hdz$) is categorically equivalent to that of Boolean algebras. 
Since it may be seen (see Section \ref{G3D}) that there is a term-equivalence between the varieties of c3dp-algebras and bounded three-valued Sugihara monoids, the mentioned fact may also be obtained from \cite{GR12}, where the authors prove that the variety of bounded odd Sugihara monoids is categorically equivalent to that of G\"odel algebras. 
However, in order to be self-contained and also to avoid overly complex and powerful mathematical machinery for the purpose of this paper, we provide a direct and simple proof.
The mentioned equivalence allows us to show that every c3dp-algebra is isomorphic to an contract algebra. 

In Section \ref{G3D}, we show that the variety $\hdz$ is term equivalent to other well-studied varieties and, in consequence, contract algebras may also be seen (up to term equivalence) as algebras in these other varieties.

Finally, in Section \ref{FS} we compare our work with the adjunction in \cite{IRp}.

\section{Contracts} \label{AGC}

Contracts provide a formal framework for contract-based design (see \cite{BM}). 
Contracts assume that all the functional and non-functional behaviours of the system have been modelled in advance as elements of an underlying set of behaviours. 
Two concepts to implement compositional design with contracts are required:
\begin{itemize}
\item a notion of order on contracts and
\item an operation to obtain system specifications from the specifications of subsystems.
\end{itemize}

Following this intuition, an algebraic description of contracts is proposed in \cite{IR} (see also \cite{IBV}). 
A contract algebra would be some very particular type of lattice ordered monoid, where the lattice structure renders account of the notion of order on contracts and the monoid operation is that allowing the obtention of system specifications from the specifications of subsystems. 
However, in order to simplify computations, we will consider an equivalent formulation in a different signature.

\begin{defn} \label{contract}
Let $\mathbf{B} = (B; \cap, \cup,\ ', 0, 1)$ be a Boolean algebra. 
An \emph{(assume-guarantee) contract} $C$ on $\mathbf{B}$ is a pair $(a, g) \in B \times B$.
We call $a$ the assumption of $C$ and $g$ the guarantee of $C$. 
\end{defn}

Our definition of contract is based on Chapter 6 of \cite{IR}. 

Let us define the following relation taken from \cite[p. 147]{BCNP} and \cite[p. 19]{IR}. 

\begin{defn}
Given two contracts $C_1 = (a_1, g_1)$ and $C_2 = (a_2,g_2)$, we say that $C_1$ \emph{refines} $C_2$ if $a_2 \leq a_1$ and $g_1 \vee \neg a_1 \leq g_2 \vee \neg a_2$.
\end{defn}

It is clear that the refinement relation is reflexive and transitive, that is, a preorder relation.
We will use the notation $C_1 \cong C_2$ to mean that the contracts $C_1$ and $C_2$ refine each other.

The following notion appears in \cite[p. 185]{BCNP} and \cite[p. 19]{IR}.

\begin{defn}
Given a contract $C = (a, g)$ we will say that $C$ is \emph{saturated} if $g = g \vee \neg a$.
\end{defn}

\begin{lem}
Given a contract $C = (a, g)$ the following conditions are equivalent.

(i) $C$ is saturated,

(ii) $a = a \vee \neg g$,

(iii) $g = a \to g$,

(iv) $a \vee g = 1$,

(v) $\neg a \wedge \neg g = 0$.
\end{lem}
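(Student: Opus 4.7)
The plan is to observe that all five conditions are, in the underlying Boolean algebra, reformulations of the single inequality $\neg a \leq g$ (equivalently $\neg g \leq a$, or $a \vee g = 1$). I would organize the proof by showing each of (i)--(v) reduces to this one fact, rather than chasing a long cycle of implications. (Implicit throughout is the identification of $\cap, \cup, {}'$ from Definition~\ref{contract} with $\wedge, \vee, \neg$ used in the lemma, and the Boolean definition $a \to g = \neg a \vee g$.)

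First, condition (i), namely $g = g \vee \neg a$, is by absorption just the statement $\neg a \leq g$. Condition (iii), $g = a \to g$, unfolds to $g = \neg a \vee g$ because the Boolean implication is $a \to g = \neg a \vee g$, so (iii) is literally (i). Condition (ii), $a = a \vee \neg g$, is $\neg g \leq a$; in a Boolean algebra this is equivalent to $\neg a \leq g$ by applying $\neg$ to both sides and using involutivity of complement, so (ii) $\Leftrightarrow$ (i).

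For the remaining two, (v) $\neg a \wedge \neg g = 0$ becomes $\neg(a \vee g) = 0$ by De Morgan, i.e.\ $a \vee g = 1$, which is (iv); so (iv) $\Leftrightarrow$ (v). To tie these to the others, one direction is $\neg a \leq g \Rightarrow a \vee g \geq a \vee \neg a = 1$, and conversely if $a \vee g = 1$ then $\neg a = \neg a \wedge (a \vee g) = (\neg a \wedge a) \vee (\neg a \wedge g) = \neg a \wedge g$, so $\neg a \leq g$.

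There is no substantive obstacle: the argument is routine Boolean-algebra bookkeeping, and the only thing one has to be a little careful about is the notational shift between $\cap, \cup, {}'$ in the definition of contract and $\wedge, \vee, \neg$ used in the statement of the lemma.
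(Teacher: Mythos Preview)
Your proof is correct. The paper itself states this lemma without proof, evidently treating the equivalences as routine Boolean-algebra facts; your reduction of all five conditions to the single inequality $\neg a \leq g$ is exactly the kind of argument one would supply, and each step is sound.
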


It is easily seen that given a contract $C = (a, g)$, the contract $C^*= (a, g \vee \neg a)$ is saturated and $C^* \cong C$.

From now on, contracts stands for saturated contracts.

Let us write $\cbs$ for the set of (saturated) contracts on the Boolean algebra $\mathbf{B}$. 
We have that $(\cbs, \preceq)$ is an order.
Indeed, $(a_1,g_1) \preceq (a_2,g_2)$ iff $a_2 \leq a_1$ and $g_1 \leq g_2$.
Let us refer to this poset, simply as $\cb$. 

It is easy to see that the poset $\cb$ is in fact a bounded distributive lattice with respect to the order defined. 
Indeed, the elements $(1,0)$ and $(0,1)$ behave as bottom and top, respectively.
And for $(a_1,b_1)$ and $(a_2, b_2)$ in $\cbs$, the pairs $(a_1 \cup a_2, b_1 \cap b_2)$ and $(a_1 \cap a_2, b_1 \cup b_2)$ behave as the infimum and the supremum, respectively.
We shall also write $\cb$ for the corresponding bounded distributive lattice. 

It can also be seen that both the meet- and the join-pseudo-complement exist and are given by $\neg (a,b) = (b, b')$ and $D(a,b) = (a', a)$, respectively.
 
Moreover, the Stone and co-Stone equalities and the regularity inequality hold, as we have $\neg (a,b) \vee \neg \neg (a,b) = (0,1)$, $D(a,b) \wedge DD(a,b) = (1,0)$, and $(a,b) \cap D(a,b) \preceq (c,d) \vee \neg (c,d)$.

Finally, $(1,1)$ is the only element $(a,b)$ satisfying $\neg (a,b) = \bot$ and $D(a,b) = \top$.

The following definition is essentially implicit in \cite{IR}. 
Our contribution is to choose a suitable language for the algebras involved. 

\begin{defn} \label{calg}
Let $\mathbf{B} = (B; \cap, \cup,\ ', 0, 1)$ be a Boolean algebra. 
We define the \emph{contract algebra} on $\mathbf{B}$ as the algebra ${\bf{C(B)}} = (C(B); \wedge, \vee, \neg, D, \bot, \gamma, \top)$ of type $(2, 2, 1, 1, 0, 0, 0)$ 
where $C(B) = \{ (a,b): a,b \in B $ and $a \vee b = 1\}$ and the operations are defined as follows:

$(a,b) \wedge (c,d) = (a \cup c, b \cap d)$, 

$(a,b) \vee (c,d) = (a \cap c, b \cup d)$, 

$\neg (a,b) = (b, b')$,

$D(a,b) = (a', a)$,

$\bot = (1,0)$,

$\gamma = (1,1)$,

$\top = (0,1)$.
\end{defn}

\begin{rem}
In \cite{IR} the author introduces so-called AGC-algebras without fixing any signature. 
We have fixed a signature and called them contract algebras.
\end{rem}

In Section \ref{DPA} we shall equationally characterize contract algebras. 
We shall introduce  the class of c3dp-algebras, which may be seen to be the variety generated by the class of contract algebras. 
In fact, we shall see that any c3dp-algebra is isomorphic to a contract algebra. 
This allows us to see c3dp-algebras as a sort of abstract contract algebras.

 \begin{rem}
It is well noted in \cite{IR} that different semiring structures can be defined on $C(B)$ turning it into an involutive residuated lattice.
Particularly, a unary operation $(\ )^{-1}$ (called reciprocal in \cite[p. 72]{IR}) and a binary operation $\|$ (called composition in \cite[p. 73]{IR}) turn $(C(B); \wedge, \vee, \|, (\ )^{-1}, (1,1))$ into an odd Sugihara monoid (see \cite{GR12} and \cite{MRW} for reference). 
However, these operations are definable in terms of $\neg$ and $D$ as

$(a,b)^{-1} := (b,a) = D(a,b) \wedge [(a,b) \vee \neg (a,b)]$ (see \cite[p. 72]{IR}) and

$(a_1, g_1) \| (a_2, g_2) := \ ((a_1 \cap a_2) \cup (g_1 \cap g_2)',g_1 \cap g_2) = ((a_1, g_1) \wedge \neg \neg (a_2, g_2)) \vee (\neg \neg (a_1, g_1) \wedge (a_2, g_2))$ (see \cite[p. 73]{IR}).

\noindent In fact, $(C(B); \wedge, \vee, \|, (\ )^{-1}, (1,1))$ becomes a bounded three-valued Sugihara monoid (see Definition \ref{tBOSM}).
Thus, the machinery developed in \cite{GR12} and further worked in \cite{GR15} and \cite{FG} allows to prove that the variety in the signature $\{\wedge, \vee, \|, (\ )^{-1}, \bot, \gamma, \top\}$ is categorically equivalent to Boole and the functor witnessing this equivalence is (on objects) $C:B \mapsto C(B)$. 
Three-valued Sugihara monoids are term equivalent to c3dp-algebras (presented in Section \ref{DPA}) and, hence, here we opted, for the sake of the reader, to provide a direct and elementary proof of the equivalence between the categories of Boole and c3dp-algebras, which implies the aforementioned one between three-valued Sugihara monoids and Boole (see Section \ref{G3D}).
\end{rem}

\begin{rem} \label{RO}
In \cite{DC} the author introduced the term orthopair for a notion previously studied by many other authors in several environments. 
Given a Boolean algebra {\bf{B}}, an orthopair is a pair $(a,b) \in B \times B$ such that $a \wedge b = 0$.
Furthermore, in \cite[Section 3.2]{DC} the mentioned author endows the set $O(B)$ of orthopairs  on a Boolean algebra with the structure of BZ lattices. 
It is easy to note that these BZ lattices can be endowed with a new constant $c$ satisfying $-c = c$. 
This expansion is term equivalent to the structure of c3dp-algebras (see next Section).
Moreover, an easy computation shows that for any Boolean algebra ${\bf{B}}$, the map $C(B) \to O(B)$ given by $(a,b) \mapsto (a',b')$ is an isomorphism of c3dp-algebras.


In \cite{ABCG} the authors note that orthopairs may be endowed with an IUML-algebra structure, which is known to be term equivalent to bounded odd Sugihara monoids. 
\end{rem}

\section{dp-algebras} \label{DPA}

\begin{defn}
A \emph{double p-algebra} (dp-algebra for short) is an algebra $(A; \wedge, \vee, \neg, D, 0, 1)$ of type (2, 2, 1, 1, 0, 0)  such that
	\begin{itemize}
		\item[]$(A; \wedge, \vee, 0, 1)$ is a bounded distributive lattice,
		\item[]for every $a \in A$, $\neg a$ is the greatest element $b$ of $A$ such that $a \wedge b = 0$, 
		\item[]for every $a \in A$, $D a$ is the smallest element $b$ of $A$ such that $a \vee b = 1$.
	\end{itemize}
\end{defn}

It is well known that dp-algebras form a variety, which we denote by $\dpa$.

\begin{rem} 
Let us briefly state that dp-algebras where already studied by Skolem in \cite[\S 2]{TS}.
\end{rem}

An $(A; \wedge, \vee, \neg, D, 0, 1) \in \dpa$ is said to be \emph{regular} if for every $a, b \in A$, $a \wedge Da \leq b \vee \neg b$, where $\leq$ stands for the usual lattice order. 
Clearly, regular dp-algebras form a subvariety of $\dpa$. 
As a brief historical remark, let us note that the notion of regularity was introduced by Malcev in \cite{AM} and regular dp-algebras were studied by Varlet in \cite{V1972} and Katrinak in \cite{TK}. 

Let $(A; \wedge, \vee, \neg, D, 0, 1)$ be a dp-algebra. 
An element $a \in A$ is said to be \emph{central} if $\neg a = 0$ and $Da = 1$. 
A dp-algebra with its universe having a central element is said to be a \emph{centred dp-algebra} (cdp-algebra for short).
It is easily seen that regularity and centrality are independent concepts.
However, in a regular dp-algebra there is at most one central element. 

A \emph{three-valued dp-algebra} (3dp-algebra for short) is a regular dp-algebra that satisfies the equations
\begin{itemize}
	\item[(S)] $\neg x \vee \neg\neg x = 1$ (Stonean equation), 
	\item[(cS)] $D x \wedge D D x = 0$ (coStonean equation).
\end{itemize}

3dp-algebras form a subvariety of regular dp-algebras, which we denote by $\hd$. 

Three-valued dp-algebras were studied by Varlet in \cite{V1972} and Katrinak in \cite{TK}.

The variety of centred three-valued dp-algebras, c3dp-algebras for short, which is denoted by $\hdz$, is an expansion of $\hd$ with the $0$-ary operation $c$ satisfying the equations $\neg c = 0$ and $Dc = 1$. 
It is shown in \cite{CE} that $\hd$ is the variety generated by the totally ordered dp-algebra with three elements ($\bot < m < \top$). 
Similar arguments show that $\hdz$ is the variety generated by the totally ordered centred dp-algebra with three elements. 
Furthermore, c3dp-algebras are functionally complete (straightforwardly from results in \cite{CE}).

A straightforward computation shows the following fact.

\begin{prop}
Given a Boolean algebra ${\bf{B}}$, the algebra ${\bf{C(B)}}$ is a c3dp-algebra.
\end{prop}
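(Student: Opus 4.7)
The plan is to unpack the definition of a c3dp-algebra and verify each clause in turn, leaning on what has already been noted informally in Section \ref{AGC}. Since the excerpt already establishes that $\cb$ is a bounded distributive lattice (with coordinatewise operations, bottom $(1,0)$ and top $(0,1)$), that $\neg$ and $D$ are indeed the meet- and join-pseudocomplements, and that the Stone, co-Stone and regularity laws hold, the task reduces to (i) recording these verifications in sufficient computational detail and (ii) checking that $\gamma=(1,1)$ is a central element in the sense of Section \ref{DPA}.

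For the routine lattice and pseudocomplement parts, I would first confirm closure of $C(B)$ under the seven operations. For meet: if $a\cup b=1$ and $c\cup d=1$, then by distributivity $(a\cup c)\cup (b\cap d)=((a\cup c)\cup b)\cap ((a\cup c)\cup d)=1\cap 1=1$, and the analogous check works for join; $\neg(a,b)=(b,b')$ and $D(a,b)=(a',a)$ trivially satisfy the saturation condition; and $(1,1)$ is saturated. Distributivity and boundedness are then inherited from $\mathbf{B}$. To verify that $\neg(a,b)=(b,b')$ is the meet-pseudocomplement, I would first compute $(a,b)\wedge(b,b')=(a\cup b,\,b\cap b')=(1,0)=\bot$, and then for maximality, assuming $(a,b)\wedge(c,d)=\bot$, i.e.\ $a\cup c=1$ and $b\cap d=0$, derive $d\leq b'$ from $b\cap d=0$ and $b\leq c$ from $b\leq d'\leq c$ (using saturation $c\vee d=1$), so $(c,d)\preceq(b,b')$. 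The dual argument for $D(a,b)=(a',a)$ is analogous: $(a,b)\vee(a',a)=(a\cap a',\,b\cup a)=(0,1)=\top$ by saturation of $(a,b)$, and minimality follows from the same sort of one-line Boolean manipulation.

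Next I would dispatch the three equational conditions by direct calculation. Stone: $\neg\neg(a,b)=\neg(b,b')=(b',b)$, so $\neg(a,b)\vee\neg\neg(a,b)=(b,b')\vee(b',b)=(b\cap b',\,b'\cup b)=(0,1)=\top$. Co-Stone is dually $DD(a,b)=(a,a')$ and $D(a,b)\wedge DD(a,b)=(a'\cup a,\,a\cap a')=(1,0)=\bot$. Regularity: $(a,b)\wedge D(a,b)=(1,a\cap b)$ and $(c,d)\vee\neg(c,d)=(c\cap d,1)$, and $(1,a\cap b)\preceq(c\cap d,1)$ holds vacuously since the first coordinate $1$ is the largest and the second coordinate $1$ is the largest (recall the order is reversed in the first coordinate). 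Finally, for centrality of $\gamma=(1,1)$: $\neg\gamma=(1,1')=(1,0)=\bot$ and $D\gamma=(1',1)=(0,1)=\top$, as required.

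None of the individual steps should be obstacles; the only subtlety worth flagging is the interaction between saturation ($a\vee b=1$) and the pseudocomplement identities, which is what makes the candidates $(b,b')$ and $(a',a)$ actually extremal in the saturated set $C(B)$ rather than merely in $B\times B$. Once that is noted, the whole proof is a short sequence of Boolean calculations, fitting the excerpt's own description as a \emph{straightforward computation}.
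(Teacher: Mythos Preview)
Your proposal is correct and matches the paper's approach: the paper gives no explicit proof, stating only that ``a straightforward computation shows'' the result, and your verification is precisely that computation carried out in full. The individual checks (closure, pseudocomplements via saturation, Stone, co-Stone, regularity, and centrality of $\gamma$) are all accurate as written.
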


\section{A categorical equivalence} \label{OSM-G}

In this section we prove in a direct and elementary way that there exists a categorical equivalence between the categories of Boolean and c3dp-algebras.
This equivalence is witnessed by the functor ${\bf{C}}: {\bf{\bool}} \to {\bf{\hdz}}$ that assigns to each Boolean algebra ${\bf{B}}$  the c3dp-algebra ${\bf{C(B)}}$ as in Definition \ref{calg} and to any Boolean morphism $f:B_1 \to B_2$ the map $C(f): C(B_1) \to C(B_2)$ given by $C(f) (a,b) = (fa,fb)$.

Let us now define the functor ${\bf{(\ )^-}}: {\bf{\hdz}} \to {\bf{\bool}}$ that assigns to each c3dp-algebra ${\bf{A}} = (A; \wedge, \vee, \neg, D, 0, \gamma, 1)$ the Boolean algebra ${\bf{A^-}} = (A^-; \wedge, \vee, ', 0, 1)$ with $A^- : = \{x \in A:  x \leq \gamma \}$, $\wedge$ and $\vee$ are the restrictions of those of ${\bf{A}}$, $x' := x \wedge \gamma$ and $1 := \gamma$. 
Moreover,  given a c3dp-morphism $g: A_1 \to A_2$ the map $g^-: A_1^- \to A_2^-$ is its restriction $g/A_1^-$.

Let us now see that the given functors form an equivalence pair, that is, we have to prove that ${\bf{A}} \cong {\bf{C(A^-)}}$ and ${\bf{[C(B)]^-}} \cong {\bf{B}}$ for every algebra in the corresponding class.

\begin{prop} \label{iso1}
Let ${\bf{A}} = (A; \wedge, \vee, \neg, D, 0, \gamma, 1)$ be a c3dp-algebra. 
It holds that ${\bf{A}} \cong {\bf{C(A^-)}}$.
\end{prop}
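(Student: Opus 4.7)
The plan is to exhibit an explicit isomorphism $\phi \colon \mathbf{A} \to \mathbf{C(A^-)}$, inspired by the concrete case of $\mathbf{C(B)}$, where each pair $(a,b) \in C(B)$ satisfies $(a,b) \wedge \gamma = (1,b)$ and $D(a,b) \wedge \gamma = (1,a)$, so that both coordinates can be recovered by intersecting with $\gamma$, either directly or after applying $D$. Abstractly, set $\phi(x) := (Dx \wedge \gamma,\, x \wedge \gamma)$. Well-definedness is immediate: both entries lie in $A^-$, and their join equals $(Dx \vee x) \wedge \gamma = 1 \wedge \gamma = \gamma$ by distributivity and the defining property of $D$.

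Next I verify that $\phi$ is a homomorphism of c3dp-algebras. Preservation of $\bot, \gamma, \top$ is direct; preservation of $\wedge$ and $\vee$ reduces to the dual-pseudocomplement de Morgan laws $D(x \wedge y) = Dx \vee Dy$ and $D(x \vee y) = Dx \wedge Dy$ (the second following from distributivity and the minimality of $D$). For the unary operations I plan to isolate three auxiliary identities valid in every 3dp-algebra: (i) $\neg\neg x \wedge \gamma = x \wedge \gamma$, obtained by specialising the regularity inequality to $y = \gamma$ (yielding $\gamma \leq z \vee \neg z$ for every $z$) and then distributing; (ii) $D\neg x = \neg\neg x$; and (iii) $\neg Dx = DDx$. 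The last two follow from uniqueness of Boolean complements, once one observes that Stone, coStone and the consequence $Dx \vee DDx = 1$ force $\neg x$ and $Dx$ to be complemented in $A$. With (i)--(iii) in hand, the verifications for $\neg$ and $D$ reduce to those already made for the lattice operations.

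For bijectivity I introduce the candidate inverse $\psi \colon C(A^-) \to A$ given by $\psi(u,v) := v \vee \neg u$, suggested by $(1,b) \vee \neg(1,a) = (a,b)$ in $\mathbf{C(B)}$. To check $\psi \circ \phi = \mathrm{id}_A$, I first derive $\neg Dx \leq x$ from the computation $\neg Dx = \neg Dx \wedge (x \vee Dx) = \neg Dx \wedge x$, then combine this with $x \wedge Dx \leq \gamma$ (from regularity) and the decomposition $x = (x \wedge Dx) \vee \neg Dx$ to obtain $x = (x \wedge \gamma) \vee \neg Dx$; Stone simplifies $\psi(\phi(x)) = (x \wedge \gamma) \vee \neg(Dx \wedge \gamma)$ to the same expression. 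For $\phi \circ \psi = \mathrm{id}_{C(A^-)}$, antitonicity of $D$ gives $Dv = 1$ whenever $v \leq \gamma$, so $D(v \vee \neg u) \wedge \gamma = D\neg u \wedge \gamma = u$ by (i) and (ii); the second coordinate simplifies to $v \vee (\neg u \wedge \gamma) = v$ using the Boolean fact that, in $A^-$, the equality $u \vee v = \gamma$ forces the complement of $u$ there to lie below $v$.

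The main obstacle is the preservation of $\neg$ and $D$: it requires bridging the pseudocomplement and the dual pseudocomplement via $D\neg x = \neg\neg x$ and $\neg Dx = DDx$. Once these identities are established as consequences of Stone, coStone and the defining properties of $\neg$ and $D$, the remainder of the argument becomes a sequence of routine distributive-lattice manipulations.
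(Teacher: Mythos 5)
Your proposal is correct and follows essentially the same route as the paper: the same map $\varphi(x)=(Dx\wedge\gamma,\,x\wedge\gamma)$, the same bridging identities $D\neg x=\neg\neg x$, $\neg Dx=DDx$, $\neg\neg x\wedge\gamma=x\wedge\gamma$ (plus the Stone law to handle the complement in $A^-$), and the same preimage $v\vee\neg u$ for surjectivity. The only minor difference is that you establish bijectivity by verifying an explicit two-sided inverse, whereas the paper proves injectivity via regularity (from $\neg a=\neg b$ and $Da=Db$ infer $a=b$) and surjectivity by the same computation you use for $\phi\circ\psi=\mathrm{id}$.
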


\begin{proof}
Let us define $\varphi(a) = (Da \wedge \gamma, a \wedge \gamma)$, for $a \in A$.

The map $\varphi$ is well defined as both $Da \wedge \gamma$, $a \wedge \gamma \in A^-$ and $(Da \wedge \gamma) \vee (a \wedge \gamma) = \gamma$.

Let us see that $\varphi$ is a homomorphism. 

\

We have that $\varphi(a \wedge b) = (D(a \wedge b) \wedge \gamma, a \wedge b \wedge \gamma) = 
((Da \vee Db) \wedge \gamma, a \wedge b \wedge \gamma) =  
((Da \wedge \gamma) \vee (Db \wedge \gamma), a \wedge \gamma \wedge b \wedge \gamma) =  
(Da \wedge \gamma, a \wedge \gamma) \ \wedge (Db \wedge \gamma, b \wedge \gamma) = 
\varphi(a) \wedge  \varphi(b)$. 

\

Similarly, we have that $\varphi(a \vee b) = \varphi(a) \vee  \varphi(b)$. 

\

$\varphi(\neg a) = (D\neg a \wedge \gamma, \neg a \wedge \gamma)$ and 
$\neg \varphi(a) = \neg (Da \wedge \gamma, a \wedge \gamma) = (a \wedge \gamma, [a \wedge \gamma]') $. 
So, it is enough to see that $D\neg a \wedge \gamma = a \wedge \gamma$ and that $\neg a \wedge \gamma = [a \wedge \gamma]'$.
Now, on the one hand, $D\neg a \wedge \gamma = \neg \neg a \wedge \gamma = a \wedge \gamma$. 
On the other hand, $[a \wedge \gamma]' = \neg (a \wedge \gamma) \wedge \gamma = (\neg a \vee \neg \gamma) \wedge \gamma = \neg a \wedge \gamma$.

\

$\varphi(Da) = (DDa \wedge \gamma, Da \wedge \gamma)$. 
Now, $DDa = \neg Da = \neg (Da \wedge \gamma)$ and so, $\varphi(Da) = (\neg (Da \wedge \gamma) \wedge \gamma, Da \wedge \gamma)$ whence $\varphi(Da) = ((Da \wedge \gamma)', Da \wedge \gamma) = D\varphi(a)$.
 
 \
 
 $\varphi(0) = (\gamma,0)$.

\

$\varphi(\gamma) = (\gamma, \gamma)$.

\

$\varphi(1) = (0,\gamma)$.

\

For inyectivity, suppose  $\varphi(a) = \varphi(b)$, that is, $(Da \wedge \gamma, a \wedge \gamma) = (Db \wedge \gamma, b \wedge \gamma)$. 
Then, $Da \wedge \gamma = Db \wedge \gamma$ and $a \wedge \gamma = b \wedge \gamma$ whence  $\neg Da = \neg Db$ and $\neg a = \neg b$, respectively.
Now, $\neg Da = \neg Db$ implies that $Da = Db$. 
So, by regularity it follows that $a = b$, as desired.

\

For suryectivity, let $(a,b) \in C(A^-)$  whence $a \vee b = \gamma$ which implies that 
$Da = Db = 1$ as $D\gamma = 1$ and also implies by absorption that $a \wedge \gamma = a$ and $b \wedge \gamma = b$.
Let us now see that $\varphi (\neg a \vee b) = (a,b)$. 
It is enough to see that $D(\neg a \vee b) \wedge \gamma = a$ and $(\neg a \vee b) \wedge \gamma = b$.
On the one hand, $D(\neg a \vee b) \wedge \gamma = (D\neg a \wedge Db) \wedge \gamma = D\neg a \wedge \gamma = \neg \neg a \wedge \gamma = a \wedge \gamma = a$. 
On the other hand, by distributivity  $(\neg a \vee b) \wedge \gamma = (\neg a \wedge \gamma) \vee (b \wedge \gamma) = (\neg a \wedge \gamma) \vee b = [\neg a \wedge (a \vee b)] \vee b$. 
Now,  as $\neg a \wedge (a \vee b) \leq b$, it follows that  $[\neg a \wedge (a \vee b)] \vee b= b$.
\end{proof}

\begin{cor} \label{cor1}
Every c3dp-algebra is isomorphic to a contract algebra.
\end{cor}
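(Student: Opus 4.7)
The plan is to invoke Proposition \ref{iso1} directly. Given an arbitrary c3dp-algebra $\mathbf{A} = (A; \wedge, \vee, \neg, D, 0, \gamma, 1)$, the construction $\mathbf{A} \mapsto \mathbf{A^-}$ described immediately before the proposition produces a Boolean algebra on the subset $A^- = \{x \in A : x \leq \gamma\}$, equipped with the restricted lattice operations and the complement $x' := \neg x \wedge \gamma$. Applying Definition \ref{calg} to this Boolean algebra $\mathbf{A^-}$ then yields the contract algebra $\mathbf{C(A^-)}$.

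Proposition \ref{iso1} supplies an explicit isomorphism $\varphi \colon \mathbf{A} \to \mathbf{C(A^-)}$, namely $\varphi(a) = (Da \wedge \gamma,\, a \wedge \gamma)$. Since the codomain $\mathbf{C(A^-)}$ is, by its very construction, a contract algebra on the Boolean algebra $\mathbf{A^-}$ in the sense of Definition \ref{calg}, the existence of this isomorphism is exactly the required conclusion.

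There is essentially no obstacle at this stage, since all the substantive verifications---well-definedness of $\varphi$ (checking that $Da \wedge \gamma$ and $a \wedge \gamma$ lie in $A^-$ and that their join equals $\gamma$), preservation of each of the operations $\wedge$, $\vee$, $\neg$, $D$, $\bot$, $\gamma$, $\top$, injectivity (where regularity of $\mathbf{A}$ is essential to recover $a = b$ from $\neg a = \neg b$ together with $Da = Db$), and surjectivity (where the preimage of $(a,b) \in C(A^-)$ is exhibited as $\neg a \vee b$)---have already been discharged inside the proof of Proposition \ref{iso1}. The corollary amounts to the observation that ``being of the form $\mathbf{C(B)}$ for some Boolean algebra $\mathbf{B}$'' is precisely the definition of a contract algebra.
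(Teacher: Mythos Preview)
Your proposal is correct and follows exactly the paper's approach: the corollary is an immediate consequence of Proposition~\ref{iso1}, since $\mathbf{C(A^-)}$ is by construction a contract algebra. (Incidentally, your formula $x' := \neg x \wedge \gamma$ is the intended one; the paper's ``$x' := x \wedge \gamma$'' in the definition of $\mathbf{A^-}$ is a typographical slip, as the proof of Proposition~\ref{iso1} itself confirms.)
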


\begin{cor} \label{cor2}
The variety $\hdz$ is generated by the class of contract algebras.
\end{cor}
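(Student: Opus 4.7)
The approach is to observe that this corollary is essentially immediate from Corollary \ref{cor1} together with the proposition at the end of Section \ref{DPA} which states that each ${\bf{C(B)}}$ is a c3dp-algebra. Let $\mathcal{K}$ denote the class of contract algebras and let $V(\mathcal{K})$ denote the variety it generates. I would prove $V(\mathcal{K}) = \hdz$ by establishing both inclusions.

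For $V(\mathcal{K}) \subseteq \hdz$: every ${\bf{C(B)}} \in \mathcal{K}$ is a c3dp-algebra (by the mentioned proposition), so $\mathcal{K} \subseteq \hdz$. Since $\hdz$ is a variety, it is closed under $H$, $S$ and $P$, hence it contains $V(\mathcal{K}) = HSP(\mathcal{K})$.

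For $\hdz \subseteq V(\mathcal{K})$: by Corollary \ref{cor1}, any ${\bf{A}} \in \hdz$ is isomorphic to ${\bf{C(A^-)}} \in \mathcal{K}$. Therefore $\hdz \subseteq I(\mathcal{K}) \subseteq HSP(\mathcal{K}) = V(\mathcal{K})$.

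There is essentially no obstacle here, since all the substantive work has already been carried out in Proposition \ref{iso1} and the resulting Corollary \ref{cor1}; the only thing to check is that contract algebras do lie inside $\hdz$, which is the content of the proposition preceding Section \ref{OSM-G}. The whole proof can be written in a couple of lines invoking these two facts.
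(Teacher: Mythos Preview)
Your proposal is correct and matches the paper's approach: the paper states Corollary~\ref{cor2} without proof, as an immediate consequence of Corollary~\ref{cor1} together with the proposition that every ${\bf C(B)}$ is a c3dp-algebra, which is exactly the two-inclusion argument you spell out.
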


\begin{prop} \label{iso2}
Let ${\bf{B}}$ be a Boolean algebra. 
It holds that ${\bf{B}} \cong {\bf{[C(B)]^-}}$.
\end{prop}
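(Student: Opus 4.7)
The plan is to observe that $[C(B)]^-$ has a very transparent description, and then exhibit an obvious Boolean isomorphism from $\mathbf{B}$.

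First, I would unpack the definition of $[C(B)]^-$. Since the lattice order on $C(B)$ satisfies $(a,g) \preceq (1,1) = \gamma$ iff $1 \leq a$ and $g \leq 1$, the condition reduces to $a = 1$. Moreover, the saturation constraint $a \vee b = 1$ is automatic when $a = 1$, so $[C(B)]^- = \{(1,b) : b \in B\}$.

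Next, I would define $\psi : B \to [C(B)]^-$ by $\psi(b) = (1,b)$. Bijectivity is immediate: the inverse is the second projection. It remains to check that $\psi$ preserves the Boolean operations, and this is a routine calculation using Definition \ref{calg} together with the operations induced on $A^-$ (lattice operations by restriction, complement by $x' = \neg x \wedge \gamma$, top $\gamma$, bottom $0$). For instance, $\psi(b_1) \wedge \psi(b_2) = (1,b_1) \wedge (1,b_2) = (1 \cup 1,\, b_1 \cap b_2) = (1, b_1 \cap b_2) = \psi(b_1 \cap b_2)$; dually for join; for complement, $\neg \psi(b) \wedge \gamma = (b, b') \wedge (1,1) = (1, b') = \psi(b')$; and $\psi(0) = (1,0) = \bot$, $\psi(1) = (1,1) = \gamma$.

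There is no real obstacle here: the argument is essentially bookkeeping, and the only thing that deserves comment is the order-reversal in the first coordinate of $C(B)$, which is why the characterization of $[C(B)]^-$ forces the first coordinate to be $1$ (rather than, say, $0$) and why $\psi$ lands in the second coordinate. Once that is recognized, the verification that $\psi$ is a Boolean homomorphism is automatic from the defining formulas of the operations on $\mathbf{C(B)}$.
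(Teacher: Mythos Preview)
Your proposal is correct and follows exactly the approach of the paper, which simply states that the map $\psi(x) = (1,x)$ is well defined and an isomorphism; you have merely supplied the routine verifications that the paper omits.
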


\begin{proof}
It is immediate to check that, taking $x \in B$, the map $\psi(x) = (1,x)$ is well defined and is an isomorphism.
\end{proof}

Using Propositions \ref{iso1} and \ref{iso2} we get the following result.

\begin{thm} \label{cateq}
The categories $\bool$ and $\hdz$ are equivalent.
\end{thm}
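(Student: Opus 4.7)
The plan is to bundle together the two object-level isomorphisms established in Propositions \ref{iso1} and \ref{iso2} into a genuine categorical equivalence. Since those propositions already supply maps $\varphi_A : \mathbf{A} \to \mathbf{C(A^-)}$ and $\psi_B : \mathbf{B} \to \mathbf{[C(B)]^-}$ which are shown to be isomorphisms, the remaining work is purely bookkeeping: verify that $\mathbf{C}$ and $(\ )^-$ are indeed functors, and that $\varphi$ and $\psi$ are natural in their arguments.

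First I would check functoriality. For $\mathbf{C}$, given a Boolean morphism $f: B_1 \to B_2$, the assignment $C(f)(a,b) = (fa,fb)$ lands in $C(B_2)$ because $f$ preserves $\vee$ and $1$, so $a \vee b = 1$ forces $fa \vee fb = 1$; preservation of $\wedge, \vee, \neg, D, \bot, \gamma, \top$ in the sense of Definition \ref{calg} follows coordinate-wise from the fact that $f$ is a Boolean morphism (in particular $(fb)' = f(b')$). Identities and composition are preserved on the nose. For $(\ )^-$, given a c3dp-morphism $g: A_1 \to A_2$, the restriction $g/A_1^-$ lands in $A_2^-$ because $g\gamma = \gamma$ and $g$ is order-preserving; its preservation of $\wedge, \vee$ is immediate, and it preserves $x' = x \wedge \gamma$ because $g$ preserves both $\neg$ and $\gamma$ (one then checks $x' = \neg x \wedge \gamma$ in $A^-$, but either description gives the same map on $A^-$).

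Next I would verify naturality. For $\varphi$, given a c3dp-morphism $g: A_1 \to A_2$ and $a \in A_1$, compute
\[
\mathbf{C}(g^-)(\varphi_{A_1}(a)) = (g(Da \wedge \gamma), g(a \wedge \gamma)) = (Dga \wedge \gamma, ga \wedge \gamma) = \varphi_{A_2}(ga),
\]
using that $g$ commutes with $D$, $\wedge$ and $\gamma$. For $\psi$, given a Boolean morphism $f: B_1 \to B_2$ and $x \in B_1$, one has $[C(f)]^-(\psi_{B_1}(x)) = [C(f)]^-(1,x) = (1,fx) = \psi_{B_2}(fx)$, using that $f1 = 1$. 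Combined with Propositions \ref{iso1} and \ref{iso2}, which guarantee that each component of $\varphi$ and $\psi$ is an isomorphism, this yields natural isomorphisms $\mathrm{Id}_{\hdz} \Rightarrow \mathbf{C} \circ (\ )^-$ and $\mathrm{Id}_{\bool} \Rightarrow (\ )^- \circ \mathbf{C}$, hence a categorical equivalence.

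There is no real obstacle here; every clause is a direct consequence of the definitions together with the two preceding propositions. The only point that could mildly slip the unwary is confirming that the two different descriptions of the Boolean complement in $A^-$ agree (namely, $x \wedge \gamma$ for $x \in A^-$ as in the definition of $(\ )^-$, versus $\neg x \wedge \gamma$ as used implicitly via $\varphi$); this is the calculation $[a \wedge \gamma]' = \neg a \wedge \gamma$ already carried out inside the proof of Proposition \ref{iso1}, so it requires no fresh argument.
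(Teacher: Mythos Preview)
Your argument is correct and follows exactly the route the paper intends: the paper's own proof is the single line ``Using Propositions \ref{iso1} and \ref{iso2} we get the following result,'' and you have simply spelled out the routine functoriality and naturality checks that this sentence suppresses. The only wrinkle is your closing remark about the ``two different descriptions'' of the complement in $A^-$: the formula $x' := x \wedge \gamma$ in the paper is a typo (for $x \leq \gamma$ it would give $x' = x$), and the computation in Proposition~\ref{iso1} confirms the intended definition is $x' := \neg x \wedge \gamma$, so there is nothing to reconcile---just a misprint to read past.
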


\begin{rem}
As was stated in the Introduction, the categorical equivalence proved in this section could be obtained adapting results from  \cite{GR12}. 
It is worth noting (see next section) that the categories of c3dp-algebras and bounded three-valued Sugihara monoids are equivalent.
Moreover, there is a categorical equivalence between the categories of bounded odd Sugihara monoids and G\"odel algebras as proved in \cite{GR12}.
Restricting this equivalence to one between Boolean algebras and bounded three-valued Sugihara monoids, it is possible to elicit the desired categorical equivalence.
Particularly, (see Theorem 6.4. in \cite{GR12}), any bounded three-valued Sugihara monoid is isomorphic (as bounded Sugihara monoid) to a contract algebra, and hence, $\vbtsm$ is the variety generated by the class of contract algebras. 
\end{rem}

\

In the next section we present alternative characterizations for contract algebras, possibly, with a different signature.

\section{Some varieties term-equivalent to $\hdz$} \label{G3D}

In this section different well-known varieties of algebras of relevance in this area is presented. 
We recall some basic properties of the members of these varieties. 
Furthermore, since these varieties are shown to be term-equivalent to $\hdz$, they turn out to be alternative abstract descriptions for contract algebras.

\subsection{Odd Sugihara monoids}

Odd Sugihara monoids are well-known involutive commutative residuated lattices, of interest in the study of certain relevance logics \cite{MRW}. 
Their lattice reducts are De Morgan algebras. 

The following definitions are based in \cite{MRW}. 
Note that the same classes of algebras appear in \cite{GR12} using a different signature.

\begin{defn} \label{OSM}
An \emph{odd Sugihara monoid} is an algebra $(A; \wedge, \vee, \cdot, \sim, e)$ of type (2,2,2,1,0) such that 
\begin{itemize}
		\item[\textbf{A1}] $(A; \vee, \wedge)$ is a distributive lattice,
		\item[\textbf{A2}] $(A; \cdot, e)$ is a commutative monoid such that $x \leq x \cdot x$,
		\item[\textbf{A3}] $\sim \sim x = x$,
		\item[\textbf{A4}] if $z \cdot x \leq y$, then $z\ \cdot \sim y \leq\ \sim x$,
		\item[\textbf{A5}] $\sim e = e$.
	\end{itemize}
\end{defn}

\begin{defn} \label{BOSM}
A \emph{bounded odd Sugihara monoid} is an algebra $(A; \wedge, \vee, \cdot, \sim, 0, e, 1)$ of type (2,2,2,1,0,0,0) such that 
\begin{itemize}
		\item[\textbf{}] $(A; \vee, \wedge, 0, 1)$ is a bounded lattice,
		\item[\textbf{}] $(A; \wedge, \vee, \cdot, \sim, e)$ is an odd Sugihara monoid.
	\end{itemize}
\end{defn}

It is well known that the class of bounded odd Sugihara monoids forms a variety.

\begin{prop} \label{basicfacts}
For any $a, b$ in a bounded odd Sugihara monoid, 
\begin{enumerate}
	\item[(i)] $a \wedge b \leq a \cdot b$,
	\item[(ii)] $a\ \cdot \sim a \leq e$,
	\item[(iii)] $a\ \wedge \sim(a \cdot 1) = 0$,
    \item[(iv)] $(a\ \cdot \sim b) \wedge (b\ \cdot \sim a)\leq e$.
\end{enumerate}
\end{prop}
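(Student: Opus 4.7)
My plan is to first set up two auxiliary facts that do most of the work, and then derive (i)--(iv) in order. The auxiliary facts are: (a) the product $\cdot$ is monotone in each argument, and (b) $\sim$ is an order-reversing involution, and in particular sends $1 \mapsto 0$. Both fall out of axiom \textbf{A4} once it is recognized as a biconditional: applying \textbf{A4} twice and using \textbf{A3} gives $z \cdot x \leq y$ iff $z \cdot \sim y \leq \sim x$. Taking $z = e$ (unit of \textbf{A2}) and specializing yields $x \leq y \Rightarrow \sim y \leq \sim x$, so $\sim$ is order-reversing; with \textbf{A3} it is an involution, hence a bijection, forcing $\sim 1 = 0$. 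Monotonicity of $\cdot$ then follows by applying \textbf{A4} twice to the trivial inequality $z \cdot y \leq z \cdot y$: first it gives $z \cdot \sim(z\cdot y) \leq \sim y$, and if $x \leq y$ then $\sim y \leq \sim x$ so $z \cdot \sim(z\cdot y) \leq \sim x$, and a second application of \textbf{A4} together with \textbf{A3} produces $z \cdot x \leq z \cdot y$.

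With these in hand, (i) is immediate from the square-expansion axiom $a \wedge b \leq (a \wedge b)\cdot (a \wedge b)$ in \textbf{A2}, together with monotonicity applied to $a \wedge b \leq a$ and $a \wedge b \leq b$. For (ii), I would apply \textbf{A4} to the identity $a \cdot e \leq a$ (with $z := a$, $x := e$, $y := a$) to obtain $a \cdot \sim a \leq \sim e$, and conclude via \textbf{A5}. For (iii), apply \textbf{A4} to the reflexive inequality $a \cdot 1 \leq a \cdot 1$ (with $z := a$, $x := 1$, $y := a \cdot 1$), yielding $a \cdot \sim(a \cdot 1) \leq \sim 1 = 0$; then (i) gives $a \wedge \sim(a \cdot 1) \leq a \cdot \sim(a \cdot 1) \leq 0$, and the reverse inequality is trivial. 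Finally, for (iv), use (i) to dominate the meet by the product $(a \cdot \sim b) \cdot (b \cdot \sim a)$, rearrange by the commutative monoid structure of \textbf{A2} to $(a \cdot \sim a) \cdot (b \cdot \sim b)$, and apply (ii) to each factor together with monotonicity to bound the whole expression by $e \cdot e = e$.

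The only real obstacle is the initial setup: one must notice that \textbf{A4} is in fact a biconditional and use it in the right combination with the monoid unit to simultaneously extract the antitonicity of $\sim$ and monotonicity of $\cdot$. Once those two tools are available, items (i)--(iv) cascade with essentially no further computation beyond unit, commutativity, and the use of \textbf{A5}.
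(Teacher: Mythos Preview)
Your proof is correct and follows essentially the same route as the paper's: the paper proves (i)--(iv) by the same chain of arguments (square-expansion plus monotonicity for (i), the unit and \textbf{A4}/\textbf{A5} for (ii), the reflexive instance of \textbf{A4} and part (i) for (iii), and (i), (ii), commutativity for (iv)). The only difference is that you explicitly derive the monotonicity of $\cdot$, the antitonicity of $\sim$, and $\sim 1 = 0$ from \textbf{A3}--\textbf{A4}, whereas the paper invokes these as known background facts.
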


\begin{proof}
Due to \textbf{A2}, we have that $a \wedge b \leq (a \wedge b) \cdot (a \wedge b)$. 
Since $a \wedge b \leq a, b$, by monotonicity of $\cdot$ it follows that $(a \wedge b) \cdot (a \wedge b) \leq a \cdot b$. 
Hence, it follows that $a \wedge b \leq a \cdot b$.
	 
In order to prove (ii), use that $e$ is the identity of the monoidal structure of $\mathbf{A}$, that is, $a \cdot e = a$. 
By \textbf{A4}, it follows that $a\ \cdot \sim a \leq\ \sim e$. 
By \textbf{A5}, $a\ \cdot \sim a \leq e$.
	 	 
For (iii), from $a \cdot 1 \leq a \cdot 1$ by \textbf{A4} it follows that $a\ \cdot \sim(a \cdot 1) \leq 0$ and then we get our goal by part (i).

Finally, in order to prove (iv), by parts (i) and (ii), and the fact that the monoid reduct of $\mathbf{A}$ is commutative, we have that, $(a\ \cdot \sim b) \wedge (b\ \cdot \sim a) \leq (a\ \cdot \sim b) \cdot (b\ \cdot \sim a) = (a\ \cdot \sim a) \cdot (b\ \cdot \sim b) \leq e \cdot e = e$.	\end{proof}

Recall that in any involutive residuated lattice fusion and the residual are interdefinable; precisely, $x \cdot y :=:\ \sim(x \to\ \sim y)$ and $x \ra y :=:\ \sim(x\ \cdot \sim y)$. 
Hence, the inequality in part (iv) of the previous Proposition may be stated as 

$e \leq (a \to b) \vee (b \to a)$, 

\noindent that is to say, odd Sugihara monoids are prelinear as residuated lattices.
Then, the variety of (bounded) odd Sugihara monoids is generated by its totally ordered members (see Section 3 in \cite{HRT} for details).

In \cite{MRWb}, the authors provide an equational base for the variety generated by the Sugihara three-element chain using the equations 

(21) $e \leq (x \to y) \vee (y \to x)$,

(22) $e \leq [x \to (y \vee \neg y)] \vee (y \wedge \neg y)$.

\noindent In this paper, the variety will be called $\vbtsm$ and its elements will be called (bounded) three-valued Sugihara monoids.
In what follows, we give an alternative equational base for $\vbtsm$.

\begin{defn} \label{tBOSM}
A (bounded) odd Sugihara monoid is said to be \emph{three-valued} if it satisfies
\begin{itemize}
\item[\textbf{A6}] $e \leq x\ \vee \sim(x \cdot 1)$.
\end{itemize}
\end{defn}

In the context of bounded odd Sugihara monoids, {\bf{A6}} is equivalent to equation (22) in \cite{MRWb}. 

\begin{rem}
In any bounded odd Sugihara monoid, equation \textbf{A6} above holds if and only if  the equation {\bf{A6$^\prime$}} given below holds.
\begin{itemize}
\item[\bf{A6$^\prime$}] $(x \cdot (y\ \wedge \sim y)) \wedge (y\ \vee \sim y) \leq e$.
\end{itemize}

\noindent Up to term equivalence, equation {\bf{A6$^\prime$}} is equation (22) in \cite{MRWb}.
\end{rem}

It is clear that the class of bounded three-valued Sugihara monoids forms a variety, which we denote by $\vbtsm$.

\begin{thm}[Theorem 6.2(ii) in \cite{MRWb}]\label{threesugihara}
The variety $\vbtsm$ is generated by the Sugihara chain of three elements $S_3$. 
\end{thm}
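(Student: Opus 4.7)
The plan is to reduce to the totally ordered case and then show that every nontrivial totally ordered member of $\vbtsm$ is exactly $S_3$. The prelinearity observation following Proposition \ref{basicfacts} shows that the inequality $e \leq (a \to b) \vee (b \to a)$ holds in every bounded odd Sugihara monoid, so the variety $\vbtsm$ is generated by its totally ordered members. It therefore suffices to prove that any nontrivial totally ordered algebra $\mathbf{A} \in \vbtsm$ is isomorphic to $S_3$.

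The heart of the argument is to combine axiom \textbf{A6}, namely $e \leq x\ \vee \sim(x \cdot 1)$, with Proposition \ref{basicfacts}(iii), which gives $x\ \wedge \sim(x \cdot 1) = 0$, in order to force the universe of a chain in $\vbtsm$ to be $\{0, e, 1\}$. Given $x \in A$, totality splits into $x \leq e$ or $e \leq x$. In the first branch, compare $x$ and $\sim(x \cdot 1)$ using totality: one possibility collapses $x$ to $0$ via the meet identity from Proposition \ref{basicfacts}(iii); the other makes $\sim(x \cdot 1) = 0$, and then \textbf{A6} forces $e \leq x$, hence $x = e$. The branch $e \leq x$ is handled by applying the same argument to $\sim x$, using \textbf{A3} and \textbf{A5} to note $\sim x\ \leq\ \sim e = e$; this yields $\sim x \in \{0, e\}$, hence $x \in \{e, 1\}$.

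Once the universe is confined to $\{0, e, 1\}$, the remaining task is to observe that all operations are uniquely determined on such a chain: the involution $\sim$ is forced by $\sim e = e$ together with \textbf{A3}; the lattice operations are the chain operations; and fusion is pinned down by monotonicity, residuation, the boundedness of $0$ and $1$, and the monoid laws. A nontrivial such algebra therefore coincides with $S_3$, and since $\vbtsm$ is generated by its totally ordered members, the theorem follows.

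I expect the only real obstacle to be the middle step: combining \textbf{A6}, Proposition \ref{basicfacts}(iii), and totality in the precise way needed to eliminate every element strictly between $0$ and $e$ (and then dually between $e$ and $1$). Everything else is essentially bookkeeping: prelinearity is an immediate consequence of Proposition \ref{basicfacts}(iv) together with the interdefinability of fusion and the residual in an involutive residuated lattice, and the identification of the unique three-valued Sugihara structure once the universe is fixed is a straightforward check.
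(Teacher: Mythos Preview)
Your argument is correct, but there is nothing in the paper to compare it against: the paper does not prove this theorem at all. It is stated with attribution to Theorem~6.2(ii) of \cite{MRWb} and used as a black box.

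What you have supplied is a self-contained proof using only the ingredients already present in the paper, namely prelinearity via Proposition~\ref{basicfacts}(iv) and the interplay of \textbf{A6} with Proposition~\ref{basicfacts}(iii). The case split on a chain is exactly right: if $x \leq e$ and $x \leq\ \sim(x\cdot 1)$ then Proposition~\ref{basicfacts}(iii) forces $x = 0$; if instead $\sim(x\cdot 1) \leq x$ then the same proposition gives $\sim(x\cdot 1) = 0$, whence \textbf{A6} yields $e \leq x$ and so $x = e$. The dual case is handled, as you say, by passing to $\sim x$. Two small items you use implicitly are worth one line each if you write this out: that $\sim$ is order-reversing with $\sim 0 = 1$ (immediate from \textbf{A3}, \textbf{A4} and boundedness), and that $0 \cdot 1 = 0$ on the resulting three-element chain (which follows from residuation, since $0 \leq 1 \to 0$ gives $0 \cdot 1 \leq 0$). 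With those noted, your proof is complete and strictly more informative than the paper's treatment.
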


As an immediate consequence of the previous theorem we have that any equation which is satisfied in $S_3$ is also satisfied by all the elements of $\vbtsm$.

\subsection{Three-valued \L ukasiewicz algebras}

\vskip.2cm
In 1920, \L ukasiewicz introduced in \cite{I} the notion of three-valued logic as the logic given by the logical matrix $(\textrm{\L}_3, \{1\})$, where $\textrm{\L}_3$ is the three-element Wajsberg hoop. 
It is well known that bounded Wajsberg hoops are term equivalent to MV-algebras. 
Hence, $\textrm{\L}_3$ may alternatively be regarded as the three-element MV-algebra.

In 1940, Moisil introduced the notion of three-valued \L ukasiewicz algebra as an attempt to give an algebraic approach to the three-valued propositional calculus considered by \L ukasiewicz in a different signature. 
Following Monteiro \cite{M}, we can define a three-valued \L ukasiewicz algebra in the following way \cite{I}.

\begin{defn}
A \emph{three-valued \L ukasiewicz algebra} ($3\textrm{\L}$-algebra for short) is an algebra $(A; \wedge, \vee, \sim, \nabla, 0, 1)$ of type (2, 2, 1, 1, 0, 0)  such that
	\begin{itemize}
		\item[]$(A; \wedge, \vee, 0, 1)$ is a bounded distributive lattice,
		\item[]for any $a, b \in A$, it holds that
			\item[]$\sim \sim a = a$,
			\item[]$\sim (a \wedge b) =\ \sim a \ \vee \sim b$,
			\item[]$\sim a \vee \nabla a = 1$,
			\item[]$a\ \wedge \sim a = \sim a \wedge \nabla a$,
			\item[]$\nabla (a \wedge b) = \nabla a \wedge \nabla b$.
	\end{itemize}
\end{defn}

It is shown in \cite{M} that any $3\textrm{\L}$-algebra $(A; \wedge, \vee, \sim, \nabla, 0, 1)$ is a Kleene algebra, that is, for every $a, b \in A$, it holds that $a \ \wedge \sim a \leq b \ \vee \sim b$.
\vskip.2cm

Let $(A; \wedge, \vee, \sim, \nabla, 0, 1)$ be a $3\textrm{\L}$-algebra. 
An element $a \in A$ is said to be \emph{central} if $\sim a = a$. 
Since any $3\textrm{\L}$-algebra is a Kleene algebra, the following result is straightforward. 

\begin{lem}
Any $3\textrm{\L}$-algebra has at most one central element.
\end{lem}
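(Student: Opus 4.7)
The plan is to exploit the Kleene inequality that the paper has just noted holds in every $3\textrm{\L}$-algebra: for all $x, y \in A$, one has $x\ \wedge \sim x \leq y\ \vee \sim y$. Suppose $a$ and $b$ are both central, so $\sim a = a$ and $\sim b = b$. First I would observe that centrality causes the two ``middle zone'' expressions to collapse: $a\ \wedge \sim a = a \wedge a = a$ and $b\ \vee \sim b = b \vee b = b$, and analogously with $a$ and $b$ swapped.

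Applying the Kleene inequality with $x := a$ and $y := b$ then yields $a \leq b$, and the symmetric instance with $x := b$, $y := a$ yields $b \leq a$; antisymmetry of the lattice order gives $a = b$. There is no real obstacle here, since once the Kleene property is invoked the argument is essentially a one-liner; the only conceptual point is that centrality collapses the Kleene inequality — which normally only compares the ``middle zones'' of different elements — into a direct comparison between the central elements themselves.
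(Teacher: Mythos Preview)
Your argument is correct and is exactly the route the paper has in mind: it states the result as a straightforward consequence of the Kleene inequality $a\wedge\sim a\le b\vee\sim b$, and your computation simply unpacks that one line. Nothing further is needed.
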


We say that an $3\textrm{\L}$-algebra is \emph{centred} if it has a (unique) central element. 
The class of all centred $3\textrm{\L}$-algebras forms a variety in the signature $\{\wedge, \vee, \sim, \nabla, 0, z, 1\}$, which is an expansion with the $0$-ary operation $z$ of that of $3\textrm{\L}$-algebras and satisfies the equality $\sim z = z$. 
Clearly, this variety, which we denote by $\ctl$, is term equivalent to both that of centred three-valued bounded Wajsberg hoops and that of three-valued MV-algebras.

\subsection{Centred three-valued MV-algebras}

\begin{defn}
A \emph{centred three-valued MV-algebra} (MV$_3$-algebra for short) is an algebra $(A; \oplus, \sim, 0, c)$ of type (2, 1, 0, 0)  such that
	\begin{itemize}
		\item[]$(A; \oplus, 0)$ is a commutative monoid and for any $a, b \in A$, it holds that
			\item[]$\sim \sim a = a$,
			\item[]$a \ \oplus \sim 0 =\ \sim 0$,
			\item[]$\sim (\sim a \oplus b) \oplus b = \ \sim (\sim b \oplus a) \oplus a$,
			\item[]$a \oplus (a \oplus a) = a \oplus a$ (see \cite[Equation (MVn1) on p. 433]{AT}),
			\item[]$\sim c = c$.
	\end{itemize}
\end{defn}

\noindent This variety will be denoted $\cmv$.

\subsection{Term equivalence}

We now state the result announced at the beginning of this section.

\begin{thm}\label{MainThm}
The varieties $\hdz$, $\vbtsm$, $\ctl$ and $\cmv$ are term equivalent. 
\end{thm}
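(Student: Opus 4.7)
The plan is to exploit the fact that each of the four varieties is generated by a unique three-element centred member sitting on the same underlying chain $\{0, m, 1\}$, with $m$ the designated central element. For $\hdz$ this is recalled in Section~\ref{DPA} (from \cite{CE}), together with the crucial additional fact that the three-element centred dp-algebra is functionally complete. For $\vbtsm$ this is Theorem~\ref{threesugihara}; analogous classical results, due to Moisil for $\ctl$ and to Chang--Grigolia for $\cmv$, give the corresponding generation statements. In all four cases the constants of the signature pick out the same three elements, so the four generators share a common underlying set.

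Given this, the proof reduces to a finite check. For each of the other three signatures I would exhibit explicit translation terms in both directions: terms of the $\hdz$-signature interpreting each primitive of the other signature, and terms of the other signature interpreting $\wedge, \vee, \neg, D$ and the constants $0, \gamma, 1$. Functional completeness of the three-element centred dp-algebra guarantees that translations in the ``out of $\hdz$'' direction always exist; the reverse translations are either classical or follow by the same functional-completeness argument applied inside the corresponding generator. The verification that composed translations are the identity can then be performed by case analysis on three elements, since each variety is generated by its three-element member and so equations true on the generator are true throughout the variety.

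For definiteness, the principal translations I would use are: identify the central elements $\gamma = e = z = c$; define the involution $\sim$ in $\hdz$-language as the unique order-reversing involution fixing $\gamma$ (which exists as a term by functional completeness), and, conversely, in $\vbtsm$-language set $\neg x := \,\sim(x \cdot 1)$, whose pseudo-complement behaviour is witnessed by Proposition~\ref{basicfacts}(iii) and verified on $S_3$, with $D x$ defined as its De Morgan dual; recover $\nabla$ of $\ctl$ as $\neg \neg$ in $\hdz$-language, and $\neg$ of $\hdz$ from the $\ctl$-primitives via $\nabla$ and $\sim$ in the standard Kleene way; for $\cmv$, recover the lattice operations via the usual MV-identities $x \vee y = \,\sim(\sim x \oplus y) \oplus y$ and $x \wedge y = \,\sim(\sim x \vee \sim y)$, and express $\oplus$ back from the dp-operations by the unique three-element term matching the truth-table of truncated sum.

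The main obstacle is bookkeeping rather than anything conceptual: one has to pin down four signatures' worth of translations and verify, case by case on three elements, that each translated primitive satisfies the defining equations of its target variety and that round-trips recover the original operations. Once this finite check is complete, term equivalence of all four varieties follows by transitivity with $\hdz$ as the hub, and no further ingredient beyond generation-by-the-three-element-member plus functional completeness is needed.
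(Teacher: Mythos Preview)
Your approach is correct and close in spirit to the paper's, but with a genuine methodological twist. The paper's proof simply writes down explicit translation terms in every direction (e.g.\ $\sim x := Dx \wedge (x \vee \neg x)$ and $x\cdot y := (x\wedge\neg\neg y)\vee(\neg\neg x\wedge y)$ for the $\hdz$/$\vbtsm$ pair, $\nabla x := \neg\neg x$ and $\neg x := \,\sim\nabla x$ for $\ctl$, and analogous MV-terms) and leaves the equational verification to the reader. You instead organise the argument around two structural facts---generation by a single three-element centred algebra and functional completeness of that generator---so that existence of translation terms is guaranteed a priori and only the three-element truth-table check remains; your explicit terms (e.g.\ $\neg x := \,\sim(x\cdot 1)$) coincide with the paper's where you give them.

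What each approach buys: the paper's is fully self-contained and constructive---a reader can read off the terms and compute with them---but it hides the reason the verification is finite. Yours makes that reason explicit and avoids having to hunt for some of the less obvious terms (notably $\sim$ and $\cdot$ in the $\hdz$-signature), at the cost of importing generation results for $\ctl$ and $\cmv$ not cited in the paper, and of needing functional completeness in the \emph{other} signatures for the ``into $\hdz$'' direction. That last point is the only place to be careful: functional completeness of the $\hdz$ generator gives you the outbound terms, but for the inbound ones you either need the (true, classical) fact that the centred three-element \L ukasiewicz/MV/Sugihara algebras are themselves functionally complete, or you must supply the explicit terms as the paper does. Since you do supply enough of them, there is no real gap.
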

\begin{proof}
We will make explicit the term equivalences between the different varieties and leave to the reader the straightforward but tedious labour of completing the details of the proof.
	
Let $(A; \vee, \wedge, \cdot, \sim, 0, e, 1)$ be a bounded three-valued Sugihara monoid. 
For any $x \in A$, define unary operations $\neg x :=\ \sim (x \cdot 1)$ and $Dx :=\ \sim x \cdot 1$, and the $0$-operation $c:= e$. 
Then,  $(A; \wedge, \vee, \neg, D, 0, c, 1)$ is a c3dp-algebra.

Conversely, if $(A; \wedge, \vee, \neg, D, 0, c, 1)$ is a c3dp-algebra and we define a binary operation $\cdot$ by 
\[
x \cdot y := (x \wedge \neg \neg y) \vee (\neg \neg x \wedge y),
\]
a unary operation $\sim$ by
\[
\sim x := Dx \wedge (x \vee \neg x),
\]
and the $0$-ary operation $e:=c$,
then $(A, \vee, \wedge, \cdot, \sim, 0, e, 1)$ is a bounded three-valued Sugihara monoid.

\

The term equivalence between $\hdz$ and $\ctl$ is given, in one direction, by $\sim x := Dx \wedge (x \vee \neg x)$, $\nabla x := \neg \neg x$, and $z:=c$.
In the other direction, it is given by $\neg x :=\ \sim \nabla x$, $D x := \nabla \sim x$, and $c:=z$.
\end{proof}

The term equivalence between $\hdz$ and $\cmv$ is given, in one direction, by $c := c$, $0 := 0$, $1 := \ \sim 0$, $\neg x := \ \sim(x \oplus x)$, $Dx := \ \sim x \ \oplus \sim x$, $x \vee y = [\sim(\sim x \oplus y)] \oplus y$, $x \wedge y = \sim [ \sim x \ \oplus \sim (\sim x \oplus y)]$.  

In the other direction, it is given by $c := c$, $0 := 0$, $\sim x := Dx \wedge (x \vee \neg x)$ and $x \oplus y := (x \vee \neg \neg y) \wedge (\neg \neg x \vee y)$.

\subsection{Categorical equivalence}

Since term equivalent varieties are equivalent categories, the following corollary is an immediate consequence of Theorems \ref{cateq} and \ref{MainThm}.

\begin{cor}
The	varieties $\vbtsm$, $\hdz$, $\ctl$, $\cmv$ and $\bool$ are all equivalent as categories.
\end{cor}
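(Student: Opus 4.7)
The plan is to assemble the corollary from two ingredients already in hand: Theorem \ref{MainThm}, which gives mutual term equivalences between $\hdz$, $\vbtsm$, $\ctl$, and $\cmv$, and Theorem \ref{cateq}, which gives the categorical equivalence $\bool \simeq \hdz$. Since categorical equivalence is transitive, the only remaining fact I need is the folklore observation that term-equivalent varieties yield \emph{isomorphic} (in particular, equivalent) categories when morphisms are taken to be homomorphisms.

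First, I would record (or invoke) this folklore fact: if $\mathcal{V}_1$ and $\mathcal{V}_2$ are varieties and each basic operation of $\mathcal{V}_2$ is a $\mathcal{V}_1$-term (and vice versa) in such a way that the defining equations of each variety follow from those of the other, then there is a functor $F: \mathcal{V}_1 \to \mathcal{V}_2$ which is the identity on underlying sets and on maps between them. Indeed, since every homomorphism preserves all term operations of its source signature, a $\mathcal{V}_1$-homomorphism $f: A \to B$ automatically preserves the basic operations of the induced $\mathcal{V}_2$-structures on $A$ and $B$, hence is a $\mathcal{V}_2$-homomorphism. The inverse functor is constructed symmetrically, and the compositions are identity functors.

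Applying this to Theorem \ref{MainThm}, I obtain category isomorphisms $\hdz \cong \vbtsm$, $\hdz \cong \ctl$, and $\hdz \cong \cmv$. Combined with the categorical equivalence $\bool \simeq \hdz$ given by Theorem \ref{cateq} (witnessed by the functors $\mathbf{C}$ and $(\ )^{-}$), transitivity of categorical equivalence yields that all five categories $\vbtsm$, $\hdz$, $\ctl$, $\cmv$, and $\bool$ are mutually equivalent.

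There is no real obstacle here: the content is entirely contained in the two theorems cited, and the only step beyond invoking them is the standard remark that term equivalence implies categorical isomorphism of the associated algebraic categories. One might, for completeness, indicate explicit quasi-inverses for at least one pair of varieties by reading off the translation formulas displayed in the proof of Theorem \ref{MainThm}, but no new computation is required.
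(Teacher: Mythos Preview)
Your proposal is correct and follows exactly the paper's approach: the corollary is deduced immediately from Theorem~\ref{MainThm} (term equivalence of $\hdz$, $\vbtsm$, $\ctl$, $\cmv$) together with Theorem~\ref{cateq} ($\bool \simeq \hdz$), using the standard fact that term-equivalent varieties are categorically equivalent. The paper states this in a single sentence; your version simply spells out the folklore step in more detail.
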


\begin{rem}
Since any algebra in $\hdz$ is functionally complete, it follows that any AGC-algebra (in any good signature) is also functionally complete. 
In particular, the varieties $\vbtsm$ and $\ctl$ are also functionally complete.
\end{rem}

\section{A comparison with recent work} \label{FS}

In \cite{IRp}, the set ${C}(B)$ of all contracts on a Boolean algebra $\bf{B}$ is regarded as a Heyting algebra satisfying Stone equation, that is, equation $\neg x \vee \neg \neg x = 1$, and having a minimum dense element $e = (1,1)$, where, as usual, an element $x$ is \emph{dense} iff $\neg \neg x = 1$. 
If we consider the variety $\aS$ of Stone algebras having a minimum dense element (named \emph{augmented Stone algebras} in \cite{IRp}), 
the assignment $\bf{B} \mapsto \mathcal{C}(B)$ defines a functor $\mathcal{C} : \bool \to \aS$. 
In \cite{IRp} it is shown that $\mathcal{C}$ is part of an adjoint pair $\mathcal{C} \dashv \clos$, where  $\clos(\bf{A})$ is the Stonean subalgebra of an augmented Stone algebra $\bf{A}$ formed by all its complemented elements. 

It is not hard to see that in any pseudo-complemented bounded distributive lattice $(A; \wedge, \vee, \neg, 0, 1)$ having a minimum dense element $\gamma$, the sublattice $[\gamma) = \{a \in A: a \leq \gamma\}$ together with the unary operation $N$ defined by $Na := \neg a \wedge \gamma$, is a Boolean lattice isomorphic to $\clos(A)$. 
Furthermore, in \cite{IRp}, the author could have taken the functor $(\ )^{+} : \aS \to \bool$ defined by the assignment $A \mapsto [\gamma)$ instead of $\clos$. 
Clearly, we also have that $\mathcal{C} \dashv (\ )^{+}$.
 
Due to the functional completeness of 3cdp-algebras, it follows that $\bf{C}(B)$ has the underlying structure of a Heyting algebra, which is Stonean and has a minimum dense element. 
As a consequence, we have a forgetful functor $\bf{U} : \hdz \to \aS$ making the following diagram commute.

	\[
	\xymatrix{
		&\bool  \ \ \ \ar@<1ex>[rr]^{\mathcal{C}}_{\bot} \ar@<1ex>[ddr]^{\mathbf{C}} &   & \ \ \ \aS \ar@<1ex>[ll]^{(\ )^{+}}       \\
		&                                             &   &   \\
        &    &  \hdz \ar@<1ex>[uul]^{(\ )^{-}}  \ar[uur]_{U}&   \\
	}
	\] 	

\noindent Since $\bf{C}$ and $(\ )^{-}$ witness an equivalence, it follows that $(\ )^{-} \dashv \mathbf{C}$ and, in consequence, $ \bf{U} = \mathcal{C} \circ \bf{(\ )^{-}} \dashv \mathbf{\bf{C}} \circ (\ )^{+} \cong \mathbf{\bf{C}} \circ \clos$. \\

\noindent {\bf{Acknowledgement}}

\noindent The first author was supported by Consejo Nacional de Investigaciones Cient\'ificas y T\'ecnicas (PIP 11220200100912CO, CONICET-Argentina), Universidad Nacional de La Plata (PID-X921) and Agencia Nacional de Promoci\'on Cient\'ifica y Tecnol\'ogica (PICT2019-2019-00882, ANPCyT-Argentina).

\end{document}